\relax
\documentclass[letterpaper]{article} 
\usepackage{aaai18}  
\usepackage{times}  
\usepackage{helvet}  
\usepackage{amsfonts}
\usepackage[utf8]{inputenc}
\usepackage[english]{babel}
\usepackage{amssymb,amsmath,amsthm}
\usepackage{courier}  
\usepackage{url}  
\usepackage{graphicx}  
\usepackage{color}
\usepackage{multirow}
\usepackage{float}
\usepackage{makecell}
\usepackage{tikz}
\usetikzlibrary{arrows,automata}
\usepackage{balance}
\usepackage{verbatim}
\usepackage[normalem]{ulem}
\usepackage{makecell}
\usepackage{xspace}
\frenchspacing  
\setlength{\pdfpagewidth}{8.5in}  
\setlength{\pdfpageheight}{11in}  

\newtheorem{lemma}{Lemma}

\newcommand{\ps}{PS\xspace}

\usepackage{ulem}


  \pdfinfo{
/Title (put our paper title here)
/Author (put our author list here)}
\setcounter{secnumdepth}{2}

\begin{document}

\title{Comparing and Integrating Constraint Programming and Temporal Planning\\ for Quantum Circuit Compilation}
\author{Kyle E. C. Booth$^{\dagger,\ddagger,**,+}$, Minh Do$^{\ddagger,**}$, J. Christopher Beck$^{+}$, \\ \textbf{Eleanor Rieffel$^{\dagger}$, Davide Venturelli$^{\dagger,*}$, \and Jeremy Frank$^{\ddagger}$}\\
$^{\dagger}$Quantum Artificial Intelligence Laboratory, NASA Ames Research Center, Moffett Field, CA\\
$^{\ddagger}$Planning and Scheduling Group, NASA Ames Research Center, Moffett Field, CA\\
$^{*}$USRA Research Institute for Advanced Computer Science, Mountain View, CA\\
$^{**}$Stinger Ghaffarian Technologies, Inc., Greenbelt, MD\\
$^{+}$Department of Mechanical \& Industrial Engineering, University of Toronto, Toronto, ON}

\maketitle

\begin{abstract}
Recently, the makespan-minimization problem of compiling a general class of quantum algorithms into near-term quantum processors has been introduced to the AI community. The research demonstrated that temporal planning is a strong approach for a class of quantum circuit compilation (QCC) problems. In this paper, we explore the use of constraint programming (CP) as an alternative and complementary approach to temporal planning. We extend previous work by introducing two new problem variations that incorporate important characteristics identified by the quantum computing community. We apply temporal planning and CP to the baseline and extended QCC problems as both stand-alone and hybrid approaches. Our hybrid methods use solutions found by temporal planning to warm start CP, leveraging the ability of the former to find satisficing solutions to problems with a high degree of task optionality, an area that CP typically struggles with. The CP model, benefiting from inferred bounds on planning horizon length and task counts provided by the warm start, is then used to find higher quality solutions. Our empirical evaluation indicates that while stand-alone CP is only competitive for the smallest problems, CP in our hybridization with temporal planning out-performs stand-alone temporal planning in the majority of problem classes.
 \end{abstract}

\normalem

\section{Introduction}
\label{sec:introduction}

Quantum computers apply quantum operations, called quantum gates, to qubits, the basic memory unit of quantum processors. Since physical hardware has varying characteristics and architectures, quantum algorithms are often specified as quantum circuits on idealized hardware and must be compiled to specific hardware by adding additional gates that move qubit states to locations where the desired gate can act on them. Compiled circuits that are minimal in duration not only return results more quickly, but
are vital for using near-term quantum hardware that does not support significant quantum error correction or fault tolerance: \textit{quantum decoherence}, an effect that degrades the desired quantum behavior, increases with time. It is, therefore, critical to minimize the duration of the compiled circuit.

Recently, \textit{temporal planning} \cite{fox2003pddl2} was explored to compile quantum circuits \cite{vent:ijcai17}. Gate executions were modeled as durative actions, enabling domain-independent temporal planners to find a valid quantum circuit compilation. Several state-of-the-art planners were used to empirically demonstrate that temporal planning is a promising approach to compile circuits of various sizes to an idealized hardware chip featuring the essential traits of newly emerging quantum hardware.

Historically, \textit{operations research} (OR) techniques have been the primary approach for many combinatorial optimization problems and serve as the backbone of a number of planners. More recently, \textit{constraint programming} (CP) has shown to be competitive with leading OR methods, such as
\textit{mixed-integer linear programming} (MILP), particularly for scheduling problems \cite{Ku16a}. Motivated by this performance, we pose quantum circuit compilation as a scheduling problem, as opposed to a planning problem, where the qubits represent resources and the gates are the tasks to be executed.

In this paper we explore the use of CP for \textit{quantum circuit compilation} (QCC). Our primary contributions are:
\begin{itemize}
\item A stand-alone CP approach for QCC that is competitive with existing temporal planners on small problems, though not for larger problems.
\item A hybrid temporal planning/CP approach where planning solutions are used to warm start the CP solver. Given the same amount of running time, our hybrid outperforms the majority of both stand-alone temporal planning and CP approaches across all solvers and problem classes.
\end{itemize}
In addition to our main contributions, we expand the baseline QCC problem  \cite{vent:ijcai17} to include additional characteristics reflecting realistic hardware architectures. Our expanded benchmarks include: i) less-restricted qubit state initialization,
and ii) crosstalk constraints that place additional restrictions on parallel gate operations.

We consider a refined set of temporal planners for our evaluations. Extensive empirical evaluation shows that the additional constraints lead to a more diverse set of temporal planning benchmarks and the tested temporal planners, which use different planning approaches, perform differently across the problem variations.

The paper is structured as follows: Section~\ref{sec:qcc_background} provides background on QCC and existing solution approaches using domain-independent temporal planners. Section~\ref{sec:cp_planning} describes our CP approach for QCC. Section \ref{section:bounds} discusses details surrounding the planning horizon and the number of tasks. Section~\ref{sec:hybrid} describes our novel temporal planning/CP hybrid approach. Section~\ref{sec:evaluation} details our empirical evaluation and Section~\ref{sec:conclusion} concludes the paper, noting potential future work.

\begin{figure}[tb]
  \includegraphics[width=\columnwidth]{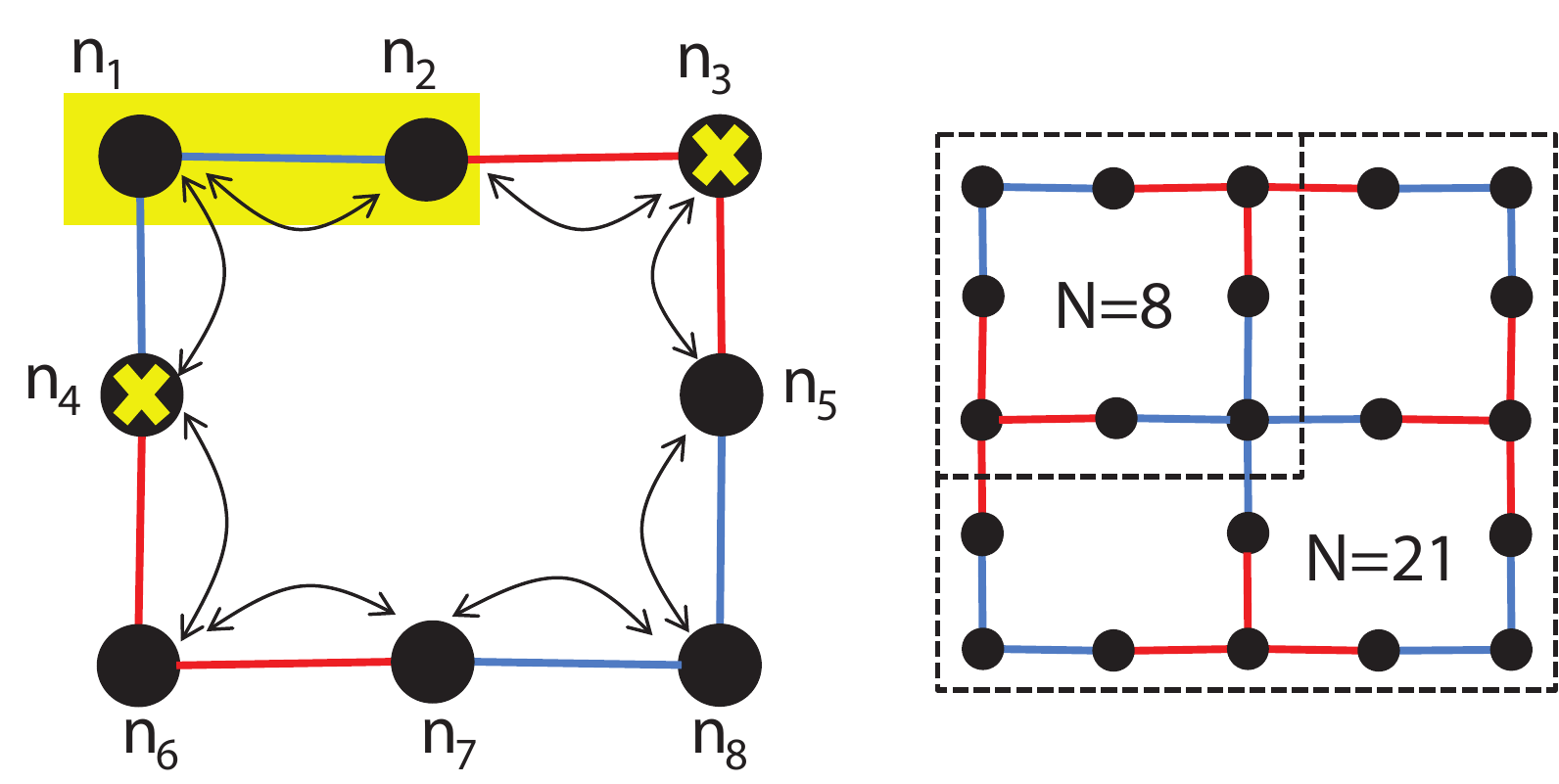}
  \caption{\textit{Left}:
A schematic for the prototype-inspired 8-qubit quantum chip \cite{Sete16} used in previous works \cite{vent:ijcai17}, as well as our numerical experiments. 2-qubit \ps gates are represented by colored edges and swap gates symbolized by edges with double arrows. Each \ps gate color is associated with a distinct duration (three for blue and four for red, in normalized clock cycles). Additionally, 1-qubit mixing gates of unit duration are present at each qubit (graph node). The yellow crosses on $n_3$ and $n_4$ visualize the disabled qubits during the action of a gate between qubits $n_1$ and $n_2$ when crosstalk constraints are present.
\textit{Right}: Dashed boxes indicate the 2 different chip sizes, $N \in \{8,21\}$, used in our empirical evaluation.
}
\label{fig:hardware}
\end{figure}

\section{Quantum Circuit Compilation}
\label{sec:qcc_background}

General quantum algorithms are often described in an idealized architecture in which a gate acts on any subset of qubits. However, in an actual superconducting qubit architecture, such as the ones manufactured by IBM (research.ibm.com/quantum), Rigetti~\cite{2017arXiv170606570R}, Google~\cite{neill2017blueprint} and UC Berkeley~\cite{o2017design}, physical constraints impose restrictions on the sets of qubits that support gate interactions. While there has been active development of software libraries to synthesize and compile quantum circuits from algorithm specifications \cite{Smith16,Steiger16,DeVitt16,Barends16}, few approaches have been explored for compiling idealized quantum circuits to realistic quantum hardware with a specific focus on swap gate insertions \cite{Beals13,Brierly15,Bremner16}, targeting algorithms that could be run in the near-term~\cite{guerreschi2017gate}.

Qubits in these quantum processors can be thought of as nodes in a planar graph, with 2-qubit quantum gates associated with edges and 1-qubit quantum gates associated with nodes. In Figure~\ref{fig:hardware} we present a model chip that is used in our benchmarks. Following the most common choice for benchmarks in the literature, the model quantum algorithm used is a variant of ``Quantum Alternating Operator Ansatz"~\cite{2017arXiv170903489H}, also known as the ``Quantum Approximate Optimization Algorithm" (QAOA)~\cite{Farhi14}, applied to the $\mathcal{NP}$-Hard Max-Cut problem. As described in \cite{vent:ijcai17} this algorithm is specified by a single type of 2-qubit gate, the \textit{phase separation} (\ps) gate, which needs to be applied to a specific set of problem \textit{goals} depending on the instance. Each goal specifies a pair of \textit{qubit states}, the information content of a qubit, that must have a \ps gate applied to them.

In the model chip, \ps gate colors (red or blue) indicate different durations, in terms of clock cycles. Sequences of swap gates, illustrated in Figure~\ref{fig:hardware} with double arrows, are used to achieve goals by moving qubit states to desired locations. Swap gates may only be available on a subset of edges in the graph and swap duration may depend on the edge. In our benchmarks we assume swap gates are available on each edge with constant duration equal to two clock cycles.

Often, it may be desired to apply \ps gates to the problem goals multiple times, where $\mathcal{P}$ is the number of times required. To do this, we separate each \ps goal application with a \textit{mixing} phase in which a single-qubit mixing gate is applied at each qubit.  All \ps  gates that involve a specific qubit state must be carried out before the mixing on that state can be applied and the second \ps stage initiated.  As in previous work \cite{vent:ijcai17}, we consider $\mathcal{P} \in \{1,2\}$.

\subsection{Definitions}

We let the set of qubits in the quantum circuit be represented as $N := \{n_1, n_2,\dots, n_{\alpha}\}$ and the set of qubit states be represented as $Q := \{q_1, q_2,\dots,q_{\beta}\}$. In the problems studied, $\alpha = \beta$. Each qubit, $n_i \in N$, starts in its corresponding (by index) state, $q_j \in Q$, where $i=j$. An integer value $T$ represents the number of time steps (i.e., clock cycles) in the scheduling horizon. Determining an appropriate $T$ is discussed in Section \ref{bound:horizon}.

We let $S$ represent the set of swap gates in the circuit architecture, $S := \{s_1, s_2, \dots, s_{\gamma}\}$, where each gate, $s_k \in S$, involves a qubit pair, $\langle n_i, n_j \rangle$. Similarly, we let $P$ represent the set of \ps gates in the architecture, $P := \{p_1, p_2, \dots, p_{\delta}\}$, where each gate, $p_{\ell} \in P$, involves a pair of qubits. We define the set of swap and \ps gates that involve qubit $n_i \in N$ as $S(i)$ and $P(i)$, respectively. Swap and \ps gates have distinct durations for their activation ($\tau_{swap}$ and $\tau_{\ell}$, respectively), with \ps gate duration depending on the class of the gate, visualized as different colors in Figure \ref{fig:hardware} (thus, duration $\tau_{\ell} \in \{\tau_{red}, \tau_{blue}\}, \forall p_{\ell} \in P$). When the problem involves multiple \ps stages, mixing gates are available at each node in the architecture with a duration $\tau_{mix}$.

The set of problem goals, defined as $G := \{g_1, g_2, \dots, g_{\epsilon}\}$, encode the specific qubit state pairs to which \ps gates need to be applied where each goal, $g_o \in G$, is a pair of qubit states, $\langle q_i, q_j \rangle$. To achieve the goal, these quantum states must be adjacent in the architecture graph (in the case of the studied architecture, all adjacent qubits have a connecting a \ps gate). The \ps gate used for goal activation is a decision variable. \\

\noindent\textit{Example:} Given the 8-qubit architecture in Figure \ref{fig:hardware} with each qubit $n_i \in N$ initially associated to the qubit state $q_j \in Q$ (with $i = j$), let us assume that the idealized circuit requires the application of a \ps gate to the qubit states $q_3$ and $q_4$. The sequence of gates to achieve the goal are:
\begin{eqnarray}
&&\{\textsc{swap}_{n_4,n_1}, \textsc{swap}_{n_2,n_3}\}\rightarrow\textsc{PS}^{blue}_{n_1,n_2}\nonumber \nonumber
\label{examplesequence}
\end{eqnarray}

The sequence takes $\tau_{swap}$ + $\tau_{blue}$ clock cycles as the two swaps can be executed in parallel.

\subsection{Temporal Planning for QCC} QCC problems can be modeled as temporal planning problems, utilizing the standard Planning Domain Definition Language (PDDL) ~\cite{vent:ijcai17}, as follows:
\begin{itemize}
\item Predicates are used to model the location qubit states and if each goal requirement has been achieved or not.
\item Swap and \ps gates are modeled as temporal actions with: i) conditions indicating whether the involved qubit states are residing on adjacent qubits and if the required \ps gates have not been already executed,\footnote{In a solution, each \ps goal gate only needs to be applied once. For planner efficiency, we prevent a \ps gate from executing multiple times through action conditions.} and ii) effects detailing new qubit state location due to swap actions and those that indicate desired \ps goals have been achieved.
\item The objective function is to minimize total plan duration and thus the makespan of the compiled circuit.
\end{itemize}
While the basic mapping is outlined above, there are additional constraints and actions involved with different variations of the QCC problem (e.g., multiple \ps stages). Refer to \cite{vent:ijcai17} for more details.

\subsection{Extensions} In this paper, we target QCC problems beyond the one addressed in ~\cite{vent:ijcai17}. With the addition of \textit{qubit state initialization} (QCC-I) and \textit{crosstalk} (QCC-X) problem variations, our implemented techniques solve a unified problem that originally required two independent steps, incorporating constraints often present in existing hardware.

\subsubsection{Qubit State Initialization (QCC-I)} In the previously studied QCC problem, qubit states are assigned their initial locations on the chip before problem solving (e.g., qubit state $q_j \in Q$ is  assigned to qubit $n_i \in N$). Here, the problem requires finding both the initial assignment of the qubit state locations and the sequence of gates to achieve the goals. Modeling this initialization step in PDDL is as follows.
\begin{itemize}
\item In the initial state all qubits are ``stateless'' and all qubit state locations are undetermined.
\item Action $a_{i,j}$ initializes the location of qubit state $q_j \in Q$ on qubit $n_i \in N$ if: i) $q_j$ has not been initialized, ii) $n_i$ is still empty, and iii) $n_{i-1}$ has been occupied (i.e., initialized). The last condition ensures that qubits are initialized in an arbitrary sequence and thus reduces the total number of valid sequences of initialize actions.
\item Action $a_{init\_finish}$ finalizes the initialization with the condition that all qubit states are located on the graph. After this action, all remaining actions can start.
\end{itemize}

\subsubsection{Crosstalk (QCC-X)} In QCC, gate operations could be applied in parallel provided a qubit was not involved in multiple gates at the same time. For certain hardware architectures, such as the devices manufactured by Google,~\cite{Boixo16}\footnote{Similar constraints are present in the devices by IBM and by UC Berkeley.} crosstalk constraints further restrict qubit involvement. Specifically, when a given qubit $n_i \in N$ is involved in a gate operation, all qubits adjacent to $n_i$ are prevented from engaging in any gate operation. For example, if a 2-qubit gate (either swap or \ps) operation is carried out on $\langle n_1, n_2 \rangle$ in Figure~\ref{fig:hardware}, then no gate operation involving $n_3$ (adjacent to $n_2$) or $n_4$ (adjacent to $n_1$) can be started until the $\langle n_1, n_2 \rangle$ operation is complete.

To model crosstalk constraints in PDDL we introduce:
\begin{itemize}
\item A new predicate $crosstalk(n_i)$ to indicate if $n_i$ is currently disabled by a gate operating on an adjacent qubit.
\item An action representing a gate operation on a pair of adjacent qubits $\langle n_i, n_j \rangle$ will: i) require $(not\ (crosstalk(n_i)) \wedge (not\ (crosstalk(n_j)))$ as durative (i.e., over-all) action conditions and ii) for every qubit $n_k$ that is connected to either $n_i$ or $n_j$, $crosstalk(n_k)$ is the start effect of the action and $(not\ (crosstalk(n_k)))$ is the action's end effect.
\end{itemize}

\section{Constraint Programming for Quantum Circuit Compilation}
\label{sec:cp_planning}

The OR and CP communities have each investigated solving optimization problems closely related to planning. Techniques developed in these fields are also utilized in existing planners as off-the-shelf solvers \cite{minhdo:aips2000,menkes:jair2005,Piacentini18b}, routines to solve sub-problems in decompositions \cite{benton:icaps2012}, models to calculate heuristic values \cite{pommerening2015heuristics,piacentini2017linear}, or as inference techniques customized for planning \cite{vidal:aij2006}.

Motivated by CP's strong performance when applied to scheduling problems, we model QCC with qubits represented as capacitated resources and gate actions as tasks to be scheduled. As opposed to planning formulations, representing QCC as a scheduling problem requires us to specify \textit{a priori} additional problem elements, including a scheduling horizon and bounds on the number of times a \ps, swap, or mixing gate will be used.

\subsection{Decision Variables}

As is common in CP, our formulation uses continuous, integer, optional/mandatory interval, and sequence decision variables. A key limitation to CP technology, as opposed to temporal planning, is that it can only reason over decision variables present in the model. For QCC problems, the number of times a particular swap or \ps gate will be used in a solution plan is unknown \textit{a priori}. We must, therefore, define valid upper bounds for these quantities and introduce the corresponding number of tasks (instantiated as optional interval variables) to the model.\footnote{For example, if the optimal solution to a problem defined on Figure \ref{fig:hardware} used swap gate $\langle n_1, n_2\rangle$ three times, we must ensure \textit{at least} three of these swap gate tasks, at this location, are supplied to the model. Often the number supplied will be more than this, as the bounds are loose. Unused gate tasks are set as absent by the solver.} We define bounds on swap, $\mathcal{U}_{swap}$, and \ps, $\mathcal{U}_{PS}$, tasks for each gate in the architecture in Section \ref{bound:swap}.

Interval variables are a natural way to model swap, \ps, and mixing gate tasks, as they have duration, need to be scheduled, and can be absent or present in a solution. An optional interval variable, $var$, is a rich variable type whose possible values are defined over a convex interval: $ var := \{\bot\} \cup \{[s,e) | s,e \in \mathbb{Z}, s \leq e\}$, where the variable takes on the value $\bot$ if it is not present in the solution\footnote{Mandatory interval variables must be present in the solution.} and $s$ and $e$ represent the start and end values of the interval if it is present in a solution. The variable Pres($var$) takes on a value of 1 if $var$ is present in the solution, indicating the gate it represents is used. Constraints are only active over present interval variables. If present, \text{Start}($var$), \text{End}($var$), and \text{Length}($var$) return the integer start and end times, as well as the length, of the interval variable $var$.

To capture qubit state changes as a result of gate execution, we use sequence variables that represent a total order over a set of interval variables; absent interval variables are not considered in the ordering. This allows for the modeling of relationships such as $\text{Pre}(var)$, which identifies the interval variable preceding $var$ in a candidate solution.

We model the problem in CP with an event-based formulation, tracking qubit state after each gate task that involves that qubit. We define the set of all tasks potentially involving qubit $n_i \in N$ as $E_i$, including a task for state initialization.

The decision variables in our formulation are:

\begin{itemize}
	\item $C_{max}$ \emph{(continuous)}: Makespan of the schedule and objective function value of the formulation, with possible values in $0 \leq C_{max} \leq T$.
    \item $y_{k,m}$ \emph{(optional interval)}: Swap task $m$ for swap gate $s_k \in S$. If present, has a start time Start($y_{k,m}$) $\in [0,T]$ and duration Length($y_{k,m}$) $=\tau_{swap}$. The set of optional swap tasks available for swap gate $s_k \in S$ is defined as: $\bar{y}_k := \{y_{k,1}, y_{k,2},\dots,y_{k,\mathcal{U}_{swap}}\}$.
    \item $z_{\ell,n}$ \emph{(optional interval)}: \ps task $n$ for \ps gate $p_{\ell} \in P$. If present, has start time Start($z_{\ell,n}$) $\in [0,T]$ and duration Length($z_{\ell,n}$) $= \tau_{\ell}$, where $\tau_{\ell} \in \{\tau_{red}, \tau_{blue}\}$, as per the architecture. The set of optional \ps tasks available for \ps gate $p_{\ell} \in P$ is defined as: $\bar{z}_{\ell} := \{z_{\ell,1}, z_{\ell,2},\dots,z_{\ell,\mathcal{U}_{PS}}\}$.
        \item $x_{i,j}$ \emph{(integer)}: State of qubit $n_i\in N$ after task $j \in E_i$, where $E_i := \{\dot{x}\} \cup \{\bar{y}_k : s_k \in S(i)\} \cup \{\bar{z}_{\ell} : p_{\ell} \in P(i)\} \cup \bar{\omega}_i$, and $\dot{x}$ is a dummy task. Each of these variables takes on a value in the set $Q$ of available qubit states, namely $x_{i,j} \in \{1,2,\dots,|Q|\}, \forall j \in E_i, n_i \in N$.
    \item $Z_o$ \emph{(interval)}: Mandatory goal \ps task for goal $g_o \in G$. Start time Start($Z_o$) $\in [0, T]$, duration Length($Z_o$) $\in \{\tau_{red}, \tau_{blue}\}$ and end time End($Z_o$). The makespan objective is the time of the latest completion time of these variables, namely: $C_{max} := \max_{g_o \in G} \big(\text{End}(Z_o)\big)$.
 \end{itemize}

In problems with two \ps stages, we include the following additional decision variables for mix gates:
 \begin{itemize}
    \item $\omega_{i, j}$ \emph{(optional interval)}: Task for mixing qubit state $q_j \in Q$ at qubit $n_i \in N$. If present, has start time Start($\omega_{i, j}$) $\in [0,T]$ with duration Length($\omega_{i,j}$) $=\tau_{mix}$. The set of optional mixing tasks available for qubit $n_i \in N$ is defined as: $\bar{\omega}_{i} := \{\omega_{i,1}, \omega_{i,2},\dots,\omega_{i,\beta}\}$.
    \item $\Omega_j$ \emph{(interval)}: Mandatory mixing task for qubit state $q_j \in Q$. Has start time Start($\Omega_j$) $\in [0,T]$ duration Length($\Omega_j$) $=\tau_{mix}$, and end time, End($\Omega_j$), representing when the mixing of state $q_j \in Q$ is complete.
\end{itemize}

\subsection{Formulation, Objective, and Constraints}
With the problem parameters, decision variables, and associated domains defined, we detail our event-based CP formulation in Figure \ref{fig:cp_qcc}. Constraints (\ref{cp_obj} - \ref{cp_symmetry1}) are required for one and two-stage \ps problems, while Constraints (\ref{mixAlternative} - \ref{goalPrec2}) are only required for two-stage \ps problems.

Objective (\ref{cp_obj}) represents the problem objective which is to minimize the makespan, $C_{max}$, of the circuit. The secondary objective, reduced in weight by a small value $\xi$, minimizes the number of swap tasks.\footnote{This value is subtracted from the objective when comparing to temporal planning approaches for consistency.} The addition of this component was found to improve solver performance while remaining a reasonable objective for QCC problems. Constraint (\ref{cp_initial}) initializes the qubit states to their required initial values and Constraint (\ref{cp_obj2}) requires that solution $C_{max}$ be greater than the end time of all goal variables.

We use a number of \textit{global constraints} \cite{van2006global} defined over a set of variables and encapsulating frequently recurring combinatorial substructure. Constraint (\ref{cp_noOverlap}) uses the \textit{NoOverlap} global constraint \cite{baptiste2012constraint} to perform incomplete, efficient domain filtering on the start times of the interval variables. The model treats each qubit, $n_i \in N$, as a unary capacity resource and ensures that swap, \ps, and mix gates are activated in such a way that two gates involving the same qubit are never active at the same time.

Constraint (\ref{cp_alternative}) makes use of the \textit{Alternative} global constraint \cite{laborie2009ibm}, which links interval variables to a set of optional interval variables, enforcing that only one variable from the optional set can be present and the start time must coincide with the mandatory variable. We use this constraint to maintain the relationship between the goal variables, $Z_o$, and the optional \ps variables, $z_{\ell,n}$. For each \ps gate, the $z_{\ell,n}$ tasks are ordered such that they coincide with a single goal, and thus $|G| = |\bar{z}_{\ell}|, \forall p_{\ell} \in P$. Each goal activates a single \ps gate task across the set of \ps gates.

Constraint (\ref{cp_swap}) implements qubit state updates when a swap interval variable is present, swapping the states of the qubits involved in the corresponding physical swap gate. The term $pre_i(y_{k,m})$ returns the task previous to swap task $y_{k,m}$ in the sequence for qubit $n_i \in N$, allowing the modeling of qubit state swap between the qubit pair, $\langle i, j \rangle$, involved in gate $s_k \in S$. Constraint (\ref{cp_ps}) models a similar relationship for \ps gate tasks. We note that while swap tasks result in an exchange of states between qubits $n_i$ and $n_j$, after a \ps task qubit states remain unchanged.

\begin{figure}[tb]
\begin{align}
& {\text{Minimize:}} \nonumber \\
& {\displaystyle C_{max} + \xi \cdot \sum\nolimits_{s_k \in S} \sum\nolimits_{y_{k,m} \in \bar{y}_{k}} \text{Pres}(y_{k,m})} \label{cp_obj} \\
& {\text{Subject to:}} \nonumber \\
& {\displaystyle x_{i,0} = i, \quad \forall i \in \{1, 2,\dots,\alpha\}} \label{cp_initial} \\
& {\displaystyle C_{max} \geq \text{End}(Z_{o}), \quad \forall g_o \in G} \label{cp_obj2} \\
& {\displaystyle \text{NoOverlap}(E_i), \quad \forall n_i \in N} \label{cp_noOverlap} \\
& {\displaystyle \text{Alternative}(Z_{o}, [z_{1, o}, \dots, z_{\delta, o}]), \quad \forall g_o \in G} \label{cp_alternative} \\
& {\displaystyle \text{Pres} (y_{k,m}) \rightarrow (x_{i,{y_{k,m}}} = x_{j,{pre_j(y_{k,m})}})} \nonumber \\
& {\displaystyle \quad \wedge \ (x_{j,{y_{k,m}}} = x_{i,pre_i(y_{k,m})}),} \nonumber \\
& {\displaystyle \quad \forall y_{k,m} \in \bar{y}_{k}, \langle i,j \rangle \in s_k, s_k \in S} \label{cp_swap} \\
& {\displaystyle \text{Pres} (z_{\ell,n}) \rightarrow (x_{i,{z_{\ell,n}}} = x_{i,{pre_i(z_{\ell,n})}})} \nonumber \\
& {\displaystyle \quad \wedge \ (x_{j,{z_{\ell,n}}} = x_{j,pre_j(z_{\ell,n})}),} \nonumber \\
& {\displaystyle \quad \forall z_{\ell,n} \in \bar{z}_\ell, \langle i,j \rangle \in p_{\ell}, p_\ell \in P} \label{cp_ps} \\
& {\displaystyle \text{Pres} (z_{\ell,n}) \rightarrow (x_{i,{z_{\ell,n}}} = g_{\ell,1} \wedge x_{j,{z_{\ell,n}}} = g_{\ell,2}) }\nonumber \\
& {\displaystyle \quad \vee \ (x_{i,{z_{\ell,n}}} = g_{\ell,2} \ \wedge \ x_{j,{z_{\ell,n}}} = g_{\ell,1}),}\nonumber \\
& {\displaystyle \quad \forall z_{\ell,n} \in \bar{z}_\ell, \langle i,j \rangle \in p_{\ell}, p_\ell \in P} \label{cp_goal} \\
& {\displaystyle \text{Pres}(y_{k,m}) \geq \text{Pres}(y_{k,m+1}),} \nonumber \\
& {\displaystyle \quad \forall y_{k,m} \in \bar{y}_{k} \setminus y_{k,\mathcal{U}_{swap}}, s_k \in S} \label{cp_symmetry1} \\
& {\displaystyle \text{Alternative}(\Omega_{j}, [\omega_{1,j}, \dots, \omega_{{\alpha}, j}])}, \quad \forall q_j \in Q \label{mixAlternative} \\
& {\displaystyle \text{Start}(\Omega_j) \geq \text{End}(Z_o)}, \quad \forall g_o \in G(j), q_j \in Q \label{goalPrec}\\
& {\displaystyle \text{End}(\Omega_j) \leq \text{Start}(Z_o)}, \quad \forall g_o \in G'(j), q_j \in Q \label{goalPrec2}
\end{align}
\caption{CP Model for QCC.} \label{fig:cp_qcc}
\end{figure}

Constraint (\ref{cp_goal}) ensures that if a particular \ps gate task, $z_{\ell,n}$, is present,  the states of the qubits involved
match the corresponding goal. The term $g_{\ell,1}$ represents the first qubit state required by goal $g_{\ell} \in G$, and $g_{\ell,2}$ the second.

To remove some of the symmetries in the model, Constraint (\ref{cp_symmetry1}) specifies that homogeneous optional swap tasks must be used lexicographically.

For problems that have two stages of phase separation, a mixing gate must be applied to each qubit state after all the goals that utilize that state are achieved, and then the goals must be repeated. To achieve this, we introduce a second goal set, $G' := \{g_{\epsilon+1}, g_{\epsilon+2}, \dots, g_{2\cdot\epsilon}\}$, which duplicates the first. We let the sets $G(j)$ and $G'(j)$ denote the goals from $G$ and $G'$, respectively, that involve qubit state $q_j \in Q$. Constraint (\ref{mixAlternative}) ensures that only one of the optional mixing tasks is used for each mixed qubit state and Constraints (\ref{goalPrec}) and (\ref{goalPrec2}) ensure that the mixing tasks separate the two \ps goal sets. Additionally, goal requirements are amended to include the duplicated goal set, $G'$.

\subsection{Qubit Initializations and Crosstalk}

The QCC problem extensions introduced in Section \ref{sec:qcc_background} require minor alterations to the CP model.

\subsubsection{QCC-I}
In the baseline and crosstalk variants, Constraint (\ref{cp_initial}) is applied unchanged. However, in the qubit initializations problem variant it is replaced with the following:
\begin{equation}
\text{AllDifferent}(x_{1,0}, x_{2,0}, \dots, x_{\alpha, 0}) \label{init_alldiff}
\end{equation}
The removal of Constraint (\ref{cp_initial}) allows the solver to select initial values for qubit states, and the addition of Constraint (\ref{init_alldiff}) enforces that the initial states on all the qubits be different, ensuring that all qubit states are present on the chip.

\subsubsection{QCC-X}

In the crosstalk variant of the problem, the qubits that can simultaneously participate in gate activations are further constrained. Constraint (\ref{cp_noOverlap}) is adjusted such that the sets $S(i)$ and $P(i)$ for a given qubit $n_i \in N$ include the gates that involve adjacent qubits to $n_i$ as well.

\section{Setting Bounds}\label{section:bounds}

As before, bounds on scheduling horizon and the number of times a particular gate is used is unknown beforehand. In this section we detail how these values are determined.

\subsection{Scheduling Horizon} \label{bound:horizon}
Our CP formulation can be implemented using a horizon set to infinity, however, it was observed that smaller horizon values improve performance.

Let $\psi$ be the length of the side of the chip ($\psi = 3$ for 8-qubits, and $\psi = 5$ for 21-qubits), $\tau^{max}_{PS} = max(\tau_{red},\tau_{blue})$ be the maximum \ps gate duration, and $\phi = (2 \cdot \psi) - 3$ be the maximum number of swaps required to bring any two qubit states to a pair of adjacent qubits.
\begin{lemma}
For $\mathcal{P}=1$ problems, $T = |G| \cdot (\phi \cdot \tau_{swap} + \tau^{max}_{PS})$ is an upper bound on the optimal makespan.
\end{lemma}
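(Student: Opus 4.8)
The plan is to certify the bound by exhibiting one explicit feasible schedule whose makespan is at most $|G| \cdot (\phi \cdot \tau_{swap} + \tau^{max}_{PS})$; since the optimal makespan is, by definition, no larger than that of any feasible schedule, this is enough. The schedule I would construct is deliberately sequential and makes no use of the parallelism the hardware allows, so the resulting bound is loose but straightforward to verify.

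First I would fix an arbitrary ordering $g_1, g_2, \dots, g_{|G|}$ of the goals and process them one block at a time, maintaining the invariant that after the block for $g_o$ finishes, the goals $g_1, \dots, g_o$ are all achieved. The observation that makes this legitimate is that in a $\mathcal{P}=1$ instance a \ps goal, once satisfied, stays satisfied: subsequent swap gates merely permute qubit-state locations, a \ps gate leaves the states of its two qubits unchanged, and the predicate recording goal achievement persists. Hence the schedule can be built goal-by-goal with no backtracking.

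Second, for a single goal $g_o = \langle q_a, q_b \rangle$, starting from whatever configuration the previous block left behind, I would invoke the definition of $\phi$ directly: at most $\phi = 2\psi - 3$ swap gates from the architecture suffice to route $q_a$ and $q_b$ onto a pair of adjacent qubits. These swaps form a chain that shares qubits, so scheduling them strictly one after another is feasible on the unary-capacity qubit resources and costs at most $\phi \cdot \tau_{swap}$ time steps; since every adjacent pair of qubits in the studied chip carries a \ps gate, the relevant \ps gate can then be fired, of duration at most $\tau^{max}_{PS} = \max(\tau_{red}, \tau_{blue})$. So the block for $g_o$ occupies at most $\phi \cdot \tau_{swap} + \tau^{max}_{PS}$ time steps, and concatenating the $|G|$ blocks yields a valid schedule of makespan at most $|G| \cdot (\phi \cdot \tau_{swap} + \tau^{max}_{PS})$, as claimed.

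The only points requiring care are (i) checking that each per-goal block is itself feasible in the temporal/CP model — the routing swaps serialize cleanly on the shared qubit resources, and no \emph{NoOverlap} or adjacency condition is violated — and (ii) using the ``once achieved, stays achieved'' property correctly, i.e. confirming that routing for $g_{o+1}$ cannot disturb $g_1, \dots, g_o$; both reduce to semantics already spelled out in the formulations above. I would \emph{not} attempt to prove ``$\phi$ swaps always suffice'' here, since $\phi$ is \emph{defined} to be exactly that maximum over all qubit-state pairs and all configurations; a self-contained argument would instead verify it from the grid-like topology of Figure~\ref{fig:hardware}, but that is orthogonal to the lemma. I expect no genuine obstacle beyond this bookkeeping.
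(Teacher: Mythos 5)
Your proposal is correct and follows essentially the same argument as the paper: achieve the goals sequentially, bounding each goal's cost by $\phi$ serialized swaps plus one \ps gate of duration at most $\tau^{max}_{PS}$, then sum over $|G|$ goals. You simply make explicit two points the paper leaves implicit (persistence of achieved goals and treating $\phi$ as given by definition), which is fine.
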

\begin{proof}
There are two components to achieving any goal: i) moving the required qubit states to adjacent qubits, and ii) applying a \ps gate task. For a single goal, the worst case scenario would have the required states located on the opposite sides of the architecture (e.g., located on $n_1$ and $n_8$ in Figure \ref{fig:hardware}), requiring a minimum of $\phi$ swaps to place them next to each other. Then, we must apply a \ps gate, which, in the worst case, will take a duration of $\tau^{max}_{PS}$. We can perform all tasks for $|G|$ goals in sequence, leading to a makespan no worse than: $T = |G| \cdot (\phi \cdot \tau_{swap} + \tau^{max}_{PS})$.
\end{proof}
We use this $T$ value as the horizon parameter in all of our CP experiments for single \ps stage problems.\footnote{An extension of this proof is used to yield a scheduling horizon bound applicable to $\mathcal{P}=2$.}

\subsection{Swap and PS Gate Tasks} \label{bound:swap}
For the scheduling formulation, we determine the number of activation tasks to be allocated per physical swap gate, $\mathcal{U}_{swap}$. If we consider achieving each goal sequentially, we could potentially have to move a qubit state through the entire architecture to become adjacent to the other qubit state. In this case, each swap gate is used once for each goal, yielding $\mathcal{U}_{swap} =|G|$. Note that, although this value is observed to perform well empirically, the circuit can work towards goals in parallel, leading (potentially) to optimal solutions that use more than $|G|$ swaps per physical gate; we will explore this in future work.

In contrast to swap tasks, our CP formulation allocates exactly one \ps task to each physical \ps gate for each goal, $g_o \in G$, resulting in $\mathcal{U}_{PS} = |G|$. This allows for the case that all of the goals are achieved using the same physical \ps gate. Note that the \ps gate tasks are ordered such that the first task corresponds to the first goal, and so forth (as noted in the CP model).
\section{Hybrid Approach with Temporal Planning and Constraint Programming}
\label{sec:hybrid}

Temporal planners find satisficing plans for most of the instances, often fairly quickly. The CP model, while performing well on small problems, struggles to find solutions for larger problems due to high levels of task optionality, reducing the inference that can be performed. For the problem instances investigated, the best solutions found used only about $10\%$ of the \ps and swap tasks allocated to the model (the remainder being set as absent).

\paragraph{Warm Start}
Leveraging the temporal planner's ability to find quality solutions early on in the search, we integrate the two techniques via a warm start procedure, a common boosting technique in OR where solutions found by one solver (e.g., the temporal planner) form a starting point for another (e.g., CP's branch-and-infer) \cite{kramer2007understanding,beck2011combining}.

While the details of how CP Optimizer makes use of a starting solution are sparse \cite{Laborie2013presentation}, there are two general ways in which it is exploited. First, the existence of an upper bound on the objective function allows the solver to propagate and remove possible values from the domains of the decision variables. For example, the bound on the length of the horizon allows many optional interval variables to be removed and the domains of the start time variables of the remaining ones to be narrowed. Second, the solution can be used heuristically to guide the search toward promising areas of the search space (e.g., \cite{Beck07b}). As a complete search technique, given enough time the CP solver will explore areas far from the warm start solution and is guaranteed to find and prove optimality.

\paragraph{Solution Mapping}
With our CP model and temporal planning solution for a problem instance, expressed as time-stamped PDDL actions, we create a mapping of the planning solution to the corresponding variable values in the CP model. In addition to assigning $C_{max}$ to the makespan found by the planner, we map the time-stamped swap, \ps, and mix gate actions to values for the CP model's interval variables and qubit state variables.
After the mapping of all gate actions, we set the remaining interval variables to absent; this maps values to all $y_{k,m}$, $z_{\ell,n}$, and $Z_{o}$ variables in the CP model. In the final mapping step, we assign values to the qubit state variables, $x_{i,j}$, by reasoning about the present (and absent) gate task interval variables, constructing a complete solution to the CP formulation. A similar mapping is applied to $\mathcal{P}=2$ problems, assigning values to the mixing variables, $w_{i,j}$ and $\Omega_j$.

\paragraph{Hybrid Implementations}

The warm start procedure yields a way to link the temporal planning and CP, however, we must specify when the solution is passed between the solvers. We describe two sets of hybrid experiments below.

\medskip
\noindent
\textit{Last Hybrid}: As a proof-of-concept, our first hybridization represents a best-case scenario and explores whether the highest quality solution found by temporal planning, within the runtime limit, can be improved by our CP model. 
We run the temporal planner until the runtime limit, $\mathcal{T}$, storing the last solution found and the time it was found, $t_{last}$. We then warm-start CP with this solution and run it for the remaining time, $\mathcal{T} - t_{last}$. Since, in practice, we would need an oracle to identify the best (last) solution, and thus $t_{last}$, prior to reaching $\mathcal{T}$, these experiments are used to provide an estimate for the best possible performance of our temporal planning/CP hybridization, given a fixed runtime $\mathcal{T}$.

\medskip
\noindent
\textit{Half Hybrid}: Our second hybridization allocates half of the runtime to temporal planning, and half to CP. The best solution found by temporal planning at $\frac{\mathcal{T}}{2}$ is used to warm-start CP, which is then run for the remaining runtime, also $\frac{\mathcal{T}}{2}$. This na\"ive approach is simple to implement and designed with the assumption that temporal planning and CP are equally valuable to the hybrid, thus allocated the same solve time.

\setlength{\tabcolsep}{.31em}
{\renewcommand{\arraystretch}{0.85}
\begin{table*}[ht]
\centering
\scriptsize
\begin{tabular}{|c|c|l|l||c||c|c|c||c|c|c||c|c|c||c|c|}
\hline
\multicolumn{4}{|c||}{Instance \& Evaluation} & CP & \multicolumn{3}{c||}{LPG} & \multicolumn{3}{c||}{TFD} &  \multicolumn{3}{c||}{POPF} & \multicolumn{2}{c|}{CPT} \\
\hline
${|N|}$ & $\mathcal{P}$ & Variant & Eval. & Alone & Alone & Last  & Half & Alone & Last & Half & Alone & Last & Half &  Alone & Hybrid  \\
 \hline
 \hline
\multirow{2}{*}{8} & \multirow{2}{*}{1} & \multirow{2}{*}{QCC} & Score & 0.88 & 0.93 & 0.93& 0.94 & 0.87& 0.93 & 0.92  & 0.83  & 0.85  & 0.85  & 1.00 (50$^{\dagger}$)  & $-$  \\
 & & & $\Delta$ & $-$  & $-$  & 1.4\% (11)  & 0.7\% (10) & $-$ & 6.8\% (30) & 6.0\% (29) &  $-$  & 9.7\% (41)  & 8.3\% (37) & $-$  & $-$  \\
 \hline
\multirow{2}{*}{8} & \multirow{2}{*}{1} & \multirow{2}{*}{QCC-I} & Score &  0.70 (42) & 0.85 & 0.91 & 0.88 & 0.89 & 0.96 & 0.96 & 0.84 & 0.86 & 0.86 & 1.00 (43$^{\dagger}$) & 1.00  \\
 &  &  & $\Delta$ & $-$ &$-$ & 6.6\% (30) & 3.5\% (25) &$-$ & 7.4\% (31) & 7.0\% (30)& $-$ & 7.8\% (32) & 6.9\% (29) & $-$  & 0.00\% (0)   \\
 \hline
\multirow{2}{*}{8} & \multirow{2}{*}{1} & \multirow{2}{*}{QCC-X}& Score & 0.90 & 0.70 & 0.95 & 0.95 & 0.91 & 0.97 & 0.96 & 0.77 & 0.79 & 0.79 & 0.00 (0)  & $-$  \\
 &  &  & $\Delta$ & $-$ & $-$ & 26.8\% (50) & 26.8\% (50) & $-$&  5.8\% (32) & 5.0\% (30) & $-$ & 18.6\% (49)  &  19.8\% (49)&  $-$ & $-$ \\
 \hline
 \hline
\multirow{2}{*}{8} & \multirow{2}{*}{2} & \multirow{2}{*}{QCC} & Score & 0.73 & 0.78 & 0.86 & 0.85 & 0.92 & 0.98 & 0.97 & 0.85 & 0.87 & 0.86 & 0.00 (0) & $-$ \\
 &  &  & $\Delta$ & $-$& $-$ & 8.7\% (42) & 7.6\% (45)& $-$&  6.3\% (40)  &5.3\% (38)  & $-$ & 9.9\% (45) & 7.6\% (45) & $-$ &  $-$  \\ \hline
\multirow{2}{*}{8} & \multirow{2}{*}{2} & \multirow{2}{*}{QCC-I} & Score & 0.00 (0) & 0.61 & 0.72  & 0.70 & 0.88 & 0.97 & 0.96 & 0.77 & 0.79 & 0.79  & 0.00 (0) &  $-$ \\
 &  &  &$\Delta$ & $-$&  $-$& 14.9\% (47) &12.4\% (44)& $-$& 10.0\% (40)& 7.4\% (36) & $-$ & 12.9\% (48) & 11.9\% (46) &  $-$& $-$  \\ \hline
\multirow{2}{*}{8} & \multirow{2}{*}{2} & \multirow{2}{*}{QCC-X} & Score & 0.68 & 0.53 & 0.82 & 0.81 & 0.87 & 0.99  & 0.98 & 0.64 (43) & 0.66 (43) & 0.67 (42)  & 0.00 (0) &  $-$  \\
 &  &  & $\Delta$& $-$ &  $-$& 34.4\% (47) & 34.2\% (50) & $-$ & 12.6\% (50) &11.5\% (49) &  $-$ & 24.2\% (43) &28.2\% (42) & $-$ &   $-$ \\
 \hline
 \hline
\multirow{2}{*}{21} & \multirow{2}{*}{1} & \multirow{2}{*}{QCC} & Score & 0.39 (45) & 0.82 & 0.86 & 0.80 & 0.61 & 0.73 & 0.58 & 0.92 & 0.94 & 0.94 & 0.00 (0) & $-$ \\
 &  &  &$\Delta$& $-$  & $-$ & 3.9\% (26) & -3.9\% (18) &  $-$  & 18.0\% (42) & -8.3\% (32) & $-$   & 6.6\% (41) & 5.8\% (36) &   $-$&   $-$ \\
 \hline
\multirow{2}{*}{21} & \multirow{2}{*}{1} & \multirow{2}{*}{QCC-I} & Score &  0.28 (7) & 0.57 & 0.59 & 0.54 (49) & 0.48 & 0.54 & 0.48 & 0.94 & 0.96 & 0.98 (49) & 0.00 (0) &  $-$ \\
  &  &  &$\Delta$ &   $-$  &  $-$  & 2.9\% (17) & -5.6\% (15)  &  $-$  &12.1\% (47) & -2.9\% (45) &  $-$  & 5.0\% (31) & 4.9\% (21) &  $-$& $-$  \\
 \hline
\multirow{2}{*}{21} & \multirow{2}{*}{1} & \multirow{2}{*}{QCC-X} & Score  &  0.35 (40) & 0.40 & 0.73 & 0.73 & 0.67 & 0.92 & 0.82 & 0.59 (19) & 0.65 (19) & 0.77 (16) & 0.00 (0) & $-$ \\
  &  &  &$\Delta$& $-$& $-$ & 42.1\% (47) & 42.1\% (49) & $-$ & 27.0\% (45) & 14.5\% (33)  & $-$ & 33.9\% (18) &  49.8\% (15)&  $-$ &  $-$ \\
 \hline
\end{tabular}
\caption{Performance comparison using \textit{plan score} and \textit{\% improvement} ($\Delta$). Plan score ($max=1.00$) uses the formula of the International Planning Competition (IPC): if the best-known makespan for instance $i$ is $P_i$, then for a given solver $X$ that returns a plan $p^i_X$: $\text{Score}(i,X) = P_i / C_{max}(p^i_X)$. For the 50-instance benchmark set, solver score is the average over the instance scores for which a solution was found by the solver. Values in brackets indicate the number of problems that were solved to feasibility (if no brackets, all 50 instances were solved). $\dagger$ indicates the instances were solved to proven optimality. \% improvement ($\Delta$) assesses the average makespan improvement of the hybrid over the best solutions of the stand-alone planner. The number in brackets indicates the number of instances the hybrid approach improved on the stand-alone planner. $|N|{=}8$ problems are run for two minutes, and $|N|{=}21$ for 10 minutes.}
\label{table:results}
\end{table*}}

\section{Empirical Evaluation}
\label{sec:evaluation}

In this section we present an extensive experimental assessment of the stand-alone methods and our hybrids.

\subsection{Setup}
\label{subsec:empirical_setup}

\subsubsection{Problem Instances} We start with the previously studied QCC problem benchmark set \cite{vent:ijcai17} that compiles Quantum Approximate Optimization (QAOA) circuits \cite{Farhi14} for MaxCut to the architecture inspired by the Rigetti Computing's quantum computer~\cite{Sete16} (refer to Figure~\ref{fig:hardware}). As most planners cannot solve $|N| = 40$ qubit problems, we study two problem sizes: $|N| = 8$ and $|N| = 21$. We solve each benchmark instance using one of three problem variations: the baseline problem (QCC), initializations (QCC-I), and crosstalk (QCC-X).

In total, we document results from four temporal planners, CP Optimizer, and our two hybrid approaches on nine sets of 50 problems each, for the total of 5,400 data points.
\subsubsection{Software} In addition to the TFD~\cite{TFD} and LPG~\cite{LPG} temporal planners used in the previous work~\cite{vent:ijcai17}, we include results for two additional planners: CPT~\cite{vidal:aij2006} and POPF~\cite{popf}. CPT is an optimality-focused planner that uses CP inference techniques in a partial-order planning framework. POPF combines forward-state-space search with the partial-order planning framework. We use the commercial software CP Optimizer to represent and solve our CP model.\footnote{Experiments are implemented in C++ on an Intel Core i7-2670QM with 8GB of RAM running Ubuntu 14.04 LTS Linux. We use CP Optimizer version 12.6.3 single-threaded with default search and extended NoOverlap inference (all other inference default). All planners, except CPT, used the same machine. CPT, due to software issues, was run on a RedHat Linux 2.4Ghz machine with 8GB RAM.}

Three of the planners tested (LPG, TFD, and POPF) are anytime planners. These continue to return plans of gradually better quality until the allotted runtime is over. CPT, as designed, returns a single solution at the end of the given run-time, if one is found, which it attempts to prove optimal. CP Optimizer is an anytime solver, improving solution quality over time, and seeking to prove optimality.

\begin{figure*}
\centering
    \includegraphics[width=0.9\textwidth]{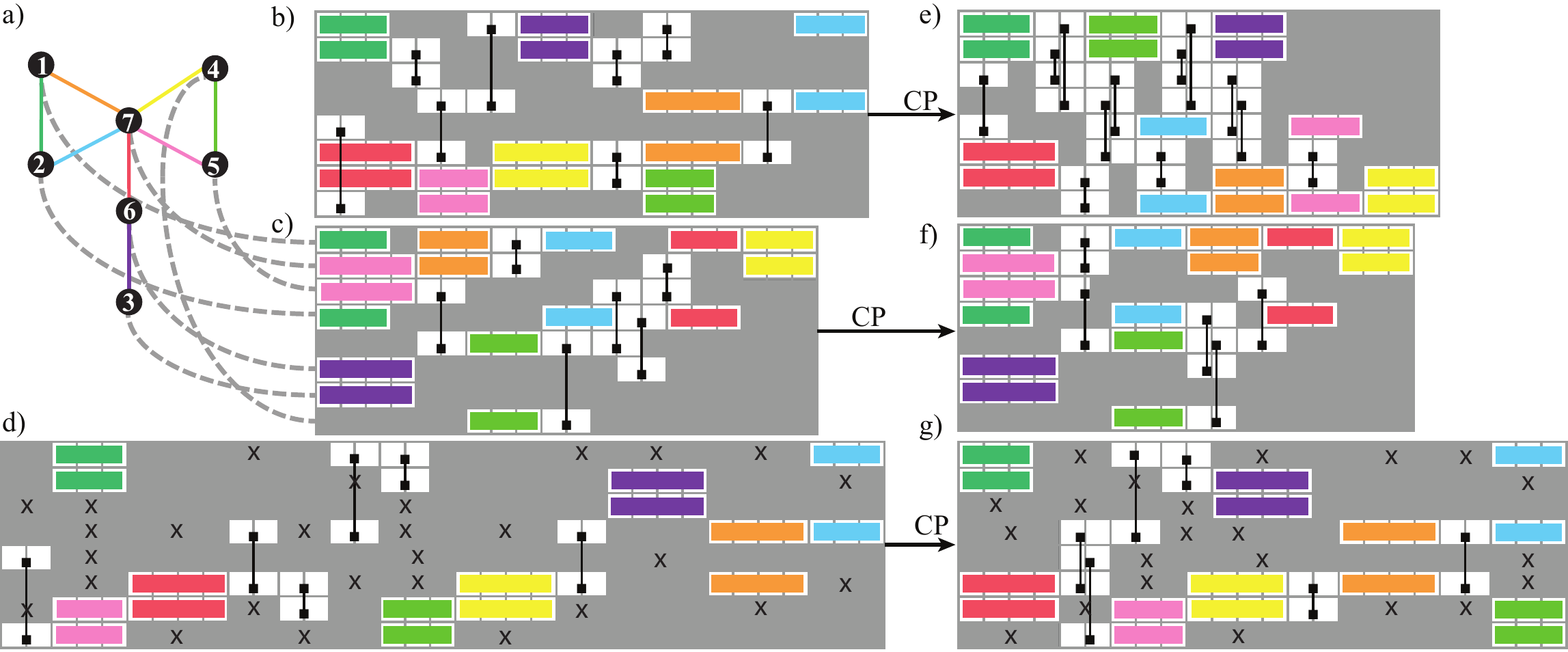}
\caption{$a)$ Problem instance for $|N|=8$ (using 7 qubits) for the QAOA-MaxCut model algorithm. Colored edges represent \ps gate executions between corresponding quantum states. $b)-g)$: Compilation plans obtained for the $\mathcal{P}=1$ problem instance in a). Clock cycles on the horizontal axis and qubit locations arranged on the vertical axis, sorted according to Figure~\ref{fig:hardware}. Colored tasks represent executed \ps gates, and swap gates by white tasks with vertical lines. b) LPG compilation for QCC baseline. c) LPG for QCC-I: gray lines indicate qubit state initialization. d) LPG for QCC-X: crosses identify disabled regions due to crosstalk constraints. e), f) and g), respectively, show the improvements when CP uses b), c), and d) as a warm start.}
\label{fig:schedules}
\end{figure*}

\subsection{Analysis}
\label{subsec:analysis}

Our experimental results are summarized in Table \ref{table:results}. A visualization of solutions to the variants of a QCC problem instance is illustrated in Figure \ref{fig:schedules}.

\subsubsection{Temporal Planning} The anytime temporal planners were able to consistently find solutions for all problems (with the exception of POPF for larger QCC-X problems), while CPT was only able to return solutions on the two problem sets with the smallest anticipated makespans. This is not surprising given that CPT works by bounding the planning horizon, loosening the bound if no plan is found and tightening it otherwise. This multi-step procedure is more time consuming as the optimal plan makespan increases.

TFD was the best overall performer in the previous study \cite{vent:ijcai17} and remains strong  in our experiments.\footnote{For a number of QCC-X problem instances, the best solution returned by TFD was invalid as confirmed by the plan validator VAL. Such solutions were manually removed for this planner until a valid solution was found.} It is the best overall planner for $|N|=8$ problems, for $\mathcal{P}\in \{1,2\}$, though its performance degrades for the larger $|N|=21$ problems. LPG is competitive for single stage problems, $\mathcal{P}{=}1$, however, seems to perform poorly for problems with a mixing stage. Solutions yielded by LPG for the three problem variants on a given problem instance, are illustrated in Figure \ref{fig:schedules} b), c), and d). The POPF planner had relatively consistent performance across all problems, providing notably strong performance for QCC and QCC-I on the larger $|N|=21$ problems.

\subsubsection{Constraint Programming} Overall, the stand-alone CP approach is competitive with temporal planning on the three smaller $|N|{=}8$, $\mathcal{P}{=}1$ problem sets, doing particularly well on QCC and QCC-X problems. The crosstalk constraints within QCC-X increase the scope of the NoOverlap constraints, enhancing inference that the solver can perform and leading to stronger performance on these problems. The QCC-I problem variant proved difficult for CP on all problem sizes, as the lack of initially defined qubit states is more likely to lead the search away from candidate solutions. The large number of optional tasks in larger problems overwhelmed the approach and lead to poor overall performance, indicated by CP's inability to find solutions to all problems.

\subsubsection{Hybrid Approaches}
The \textit{Last} hybrid improves upon every planner on every problem variant. Since this hybrid technique always initiates the CP search from the best planning solution found in the full runtime limit, the final solution found is always equivalent, or better, in quality than the stand-alone planner.

For QCC-X problems, the \textit{Half} hybrid is always beneficial - often significantly so. This result is consistent with those of stand-alone CP, which performed surprisingly well on this problem variant for $|N|{=}8$. When temporal planning produces poor solutions (0.4-0.7 plan score), the halfway switching policy is often (but not always) able to find plans of up to 49.8\% better quality. On problems that temporal planning produces high quality solutions (0.85+ plan score), the halfway switching policy is always beneficial, indicating our CP warm-start is useful for ``fine-tuning'' high quality plans. Though displaying strong performance on the majority of the problem classes, the halfway switching policy has weaker performance (\textit{w.r.t.} \% improvement) when hybridized with the LPG and TFD planners for QCC and QCC-I problems with $|N|{=}21$. We note, however, that it actually improves the solution value for the majority of these instances in the case of TFD (32 and 45 instances for QCC and QCC-I, respectively). The \textit{Half} hybrid often produces similar scores to our oracle-based hybrid and sometimes outperforms it, indicating the development of a more sophisticated switching policy may prove worthwhile.

Figure \ref{fig:schedules} illustrates CP improvement on seemingly reasonable quality plans; e), f), and g) all exhibiting improvement over their planning solution warm starts. The plan detailed in e) is notable in two ways: i) it shows that more swap gate tasks can result in a reduced makespan, and ii) it shows that the CP portion of the hybrid does more than a simple rescheduling of tasks, often changing the qubit pairs that \ps gates are applied to, and thus using different actions than those supplied by the temporal planning warm start.

\section{Conclusions \& Future Work}
\label{sec:conclusion}

In this paper, we investigate CP as an alternative to temporal planning for QCC. Our empirical work shows that stand-alone CP does not scale as well as current state-of-the-art temporal planners as problems increase in size, however, a hybrid, where CP is warm started with a solution found by temporal planning, out-performs both planning and CP alone for the majority of problems. We introduce new variations of QCC that include the ability to arbitrarily initialize qubits (QCC-I) and to account for crosstalk interactions between qubits (QCC-X). These variants produce a more diverse QCC benchmark for which the planning technologies exhibited substantially different performance variance.

Given these results, our work strengthens the message that AI planning is a suitable technology to address QCC challenges and demonstrates the benefit of integrating it with alternate optimization methods for building effective compilers of real-world quantum hardware. Our proposed hybridization is an initial investigation of how to best combine temporal planning and CP for QCC problems. In a future work, following \citeauthor{beck2011combining} (\citeyear{beck2011combining}), we will analyze the performance of the two components of our hybrid in order to determine an adaptive schedule for switching between temporal planning and CP.

\section{Acknowledgments}
This research has been supported by NASA Academic Mission Services, contract number NNA16BD14C, NASA Advanced Exploration Systems program, and the Natural Sciences and Engineering Research Council of Canada. We thank Amanda Coles, Alfonso Gerevini, Chiara Piacentini, and Vincent Vidal for their help with the temporal planners.

\newpage

\bibliographystyle{aaai}
\bibliography{aaai}

\end{document}